\theoremstyle{plain}
\numberwithin{equation}{section}
\newtheorem{thm}{Theorem}[section]
\newtheorem{lem}[thm]{Lemma}
\newtheorem{cor}[thm]{Corollary}
\newenvironment{exam}[1]%
{\begin{flushleft}\textbf{Example #1}.\enspace}%
{\end{flushleft}}
\newcounter{cond}
\newcommand{\complex}{{\mathbb C}}
\newcommand{\positive}{{\mathbb N}}
\newcommand{\real}{{\mathbb R}}
\newcommand{\ascript}{{\mathcal A}}
\newcommand{\cscript}{{\mathcal C}}
\newcommand{\pscript}{{\mathcal P}}
\newcommand{\qscript}{{\mathcal Q}}
\newcommand{\sscript}{{\mathcal S}}
\newcommand{\rmtr}{\mathrm{tr}}
\newcommand{\rmcyl}{\mathrm{cyl}}
\newcommand{\dhat}{\widehat{D}}
\newcommand{\fhat}{\widehat{f}}
\newcommand{\ghat}{\widehat{g}}
\newcommand{\hhat}{\widehat{h}}
\newcommand{\chihat}{\widehat{\chi}}
\newcommand{\mutilde}{\widetilde{\mu}}
\newcommand{\ab}[1]{\left|#1\right|}
\newcommand{\doubleab}[1]{\left\|#1\right\|}
\newcommand{\brac}[1]{\left\{#1\right\}}
\newcommand{\paren}[1]{\left(#1\right)}
\newcommand{\sqbrac}[1]{\left[#1\right]}
\newcommand{\elbows}[1]{{\left\langle#1\right\rangle}}
\newcommand{\ket}[1]{{\left|#1\right>}}
\newcommand{\bra}[1]{{\left<#1\right|}}
\begin{document}

\title{DISCRETE QUANTUM GRAVITY
}
\author{S. Gudder\\ Department of Mathematics\\
University of Denver\\ Denver, Colorado 80208, U.S.A.\\
sgudder@du.edu
}
\date{}
\maketitle

\begin{abstract}
We discuss the causal set approach to discrete quantum gravity. We begin by describing a classical sequential growth process in which the universe grows one element at a time in discrete steps. At each step the process has the form of a causal set and the ``completed'' universe is given by a path through a discretely growing chain of causal sets. We then introduce a method for quantizing this classical formalism to obtain a quantum sequential growth process which may lead to a viable model for a discrete quantum gravity. We also give a method for quantizing random variables in the classical process to obtain observables in the corresponding quantum process. The paper closes by showing that a discrete isometric process can be employed to construct a quantum sequential growth process.
\end{abstract}

\section{Introduction}  
This paper explores the causal set approach to discrete quantum gravity \cite{blms87, bdghs03, hen09}. There are many good review articles on this subject \cite{hen06, sor03, sur11} and we refer the reader to these works for more details and motivation. The origins of this approach stem from studies of the causal structure $(M,<)$ of a Lorentzian spacetime $(M,g)$. For $a,b\in M$ we write $a<b$ if $b$ is in the causal future of $a$. If there are no closed causal curves in $(M,g)$, then $(M,<)$ is a partially ordered set (poset). That is, the order $<$ satisfies:
\begin{list} {(\arabic{cond})}{\usecounter{cond}
\setlength{\rightmargin}{\leftmargin}}
\item $a\not< a$ for all $a\in M$ (irreflexivity).
\item $a<b$ and $b<c$ implies that $a<c$ (transitivity).
\end{list}
It has been shown that $(M,<)$ possesses much of the information contained in $(M,g)$
\cite{bms89, mal77, mp04, zee64}. In fact, $(M,<)$ determines the topological and even the differential structure of the manifold $(M,g)$. Moreover, $(M,<)$ can be employed to find the length of line elements and the dimension of
$(M,g)$. Finally, counting arguments on $(M,<)$ can be employed to find volume elements in $(M,g)$. Because of these properties, it is believed that the order structure $(M,<)$ provides a viable candidate for describing a discrete quantum gravity.

For a poset $(A,<)$, the \textit{past} of $b\in A$ is $\brac{a\in A\colon a<b}$. We say that $\paren{A,<}$ is
\textit{past finite} if the past of $b$ has finite cardinality for every $b\in A$. A \textit{causal set} is a past finite countable poset. One of the simplifications of this paper is that the relevant posets considered will be finite so they are automatically causal sets. Another simplification is that we shall only consider unlabeled posets. In the literature, causal sets are usually labeled according to the order of ``birth'' and this causes complications because covariant properties are independent of labeling \cite{blms87, bdghs03, rs00, sor03}. In this way our causal sets are automatically covariant.

Section~2 describes a classical sequential growth process in which the universe grows one element at a time in discrete steps. At each step the process has the form of a causal set and the ``completed'' universe is given by a path through a discretely growing chain of causal sets. The transition probability at each step is given by an expression due to Rideout-Sorkin \cite{rs00, vr06}. Letting $\Omega$ be the set of paths, $\ascript$ be the $\sigma$-algebra generated by cylinder sets and $\nu$ the probability measure governed by the transition probabilities, the dynamics is described by a Markov chain on the probability space $(\Omega ,\ascript ,\nu )$.

In Section~3 we quantize classical frameworks by forming the Hilbert space $H=L_2(\Omega ,\ascript ,\nu )$. The quantum dynamics is given by a sequence of states $\rho _n$ on $H$ that satisfy a consistency condition. We employ $\rho _n$ to construct decoherence functionals and a quantum measure $\mu$ on a ``quadratic algebra''
$\sscript$ of subsets of $\Omega$. In general, the set $\sscript$ is strictly between the collection of cylinder sets and
$\ascript$. We then present $(\Omega ,\sscript ,\mu )$ as a candidate model for quantum gravity. We also give a method for quantizing random variables in the classical process to obtain observables in the corresponding quantum process. This quantization is then used to define a quantum integral.

The sequence of states $\rho _n$ discussed in Section~3 is called a quantum sequential growth process. Section~4 shows that a discrete isometric process can be employed to construct a quantum sequential growth process. This work is related to the ``sum over histories'' approach to quantum mechanics \cite{hen06}.

Of course, much work remains to be done. Of primary importance is to find the specific form of the classical measure $\nu$ and the quantum measure $\mu$. One guiding principle is that classical general relativity theory should be a
``good approximation'' to this quantum counterpart.

\section{Quantum Sequential Growth Processes} 
Let $\pscript _n$ be the collection of all posets of cardinality $n$, $n=0,1,2,\ldots$, and let $\pscript =\cup\pscript _n$
be the collection of all finite posets. An element $a\in x$ for $x\in\pscript$ is \textit{maximal} if there is no $b\in x$ with $a<b$. If $x\in\pscript _n$, $y\in \pscript _{n+1}$, then $x$ \textit{produces} $y$ if $y$ is obtained from $x$ by adjoining a single maximal element to $x$. We also say that $x$ is a \textit{producer} of $y$ and $y$ is an
\textit{offspring} of $x$. If $x$ produces $y$ we write $x\to y$. Of course, $x$ may produce many offspring and a poset may be the offspring of many producers. Also, $x$ may produce $y$ in various isomorphic ways. For example, in Figure~1 we have that $x$ produces $u,v,w$. In this paper we identify isomorphic copies of a poset so we identify $u, v,w$ and say that the multiplicity of $x\to u$ is three and write $m(x\to u)=3$. (Strictly speaking, the multiplicity requires a labeling of the elements of a poset and this is the only place that labeling needs to be mentioned.) In Figure~1, notice that within each circle is a Hasse diagram of a poset and a rising line in a diagram represents a \textit{link}.

\begin{figure}
\setlength{\unitlength}{8pt}
\begin{center}
\begin{picture}(45,12)
\put(22,8){\circle{7}}
\put(20.5,8){\circle*{.35}}
\put(22,8){\circle*{.35}}
\put(23.5,8){\circle*{.35}}
\put(22,4){\makebox{$x$}}
\put(22,10.5){\vector(0,1){4}}
\put(19.5,8){\vector(-1,1){7.5}}
\put(24.5,8){\vector(1,1){7.5}}
\put(10,17){\circle{7}}
\put(8.5,16.2){\circle*{.35}}
\put(10,16.2){\circle*{.35}}
\put(11.5,16.2){\circle*{.35}}
\put(8.5,16.2){\line(1,2){.8}}
\put(10,16.2){\line(-1,2){.8}}
\put(9.23,17.85){\circle*{.35}}
\put(6,16){\makebox{$u$}}
\put(22,17){\circle{7}}
\put(20.5,16.2){\circle*{.35}}
\put(22,16.2){\circle*{.35}}
\put(23.5,16.2){\circle*{.35}}
\put(23.5,16.2){\line(-1,2){.8}}
\put(22,16.2){\line(1,2){.8}}
\put(22.75,17.8){\circle*{.35}}
\put(18,16){\makebox{$v$}}
\put(34,17){\circle{7}}
\put(32.5,16.2){\circle*{.35}}
\put(34,16.2){\circle*{.35}}
\put(35.5,16.2){\circle*{.35}}
\put(32.5,16.2){\line(1,1){1.5}}
\put(35.5,16.2){\line(-1,1){1.5}}
\put(34,17.7){\circle*{.35}}
\put(30,16){\makebox{$w$}}
\end{picture}
\vskip -2pc
\centerline{\textbf{Figure 1}}
\end{center}
\end{figure}

The transitive closure of ${}\to$ makes $\pscript$ into a poset itself. A \textit{path} in $\pscript$ is a sequence (string)
$x_0x_1x_2\cdots$ where $x_i\in\pscript _i$ and $x_i\to x_{i+1}$, $i=0,1,2,\ldots\,$. An $n$-\textit{path} in $\pscript$ is a finite string $x_0x_1\cdots x_n$ where again $x_i\in\pscript _i$ and $x_i\to x_{i+1}$. We denote the set of paths by
$\Omega$ and the set of $n$-paths by $\Omega _n$. If $a,b\in x$ with $x\in\pscript$, we say that $a$ is an
\textit{ancestor} of $b$ and $b$ is a \textit{successor} of $a$ if $a<b$. We say that $a$ is a \textit{parent} of $b$ and
$b$ is a \textit{child} of $a$ if $a<b$ and there is no $c$ with $a<c<b$. A link in a Hasse diagram represents a
parent-child relationship.

Let $t=(t_0,t_1,\ldots )$ be a sequence of nonnegative numbers (called \textit{coupling constants} \cite{rs00, sor03}). For $r,s\in\positive$ with $r\le s$, define
\begin{equation*}
\lambda _t(s,r)=\sum _{k=r}^s\binom{s-r}{k-r}t_k =\sum _{k=0}^{s-r}\binom{s-r}{k}t_{r+k}
\end{equation*}
For $x\in\pscript _n$, $y\in\pscript _{n+1}$ with $x\to y$ we define the \textit{transition probability}
\begin{equation}         
\label{eq21}
p_t(x\to y)=m(x\to y)\frac{\lambda _t(\alpha ,\pi )}{\lambda _t (n,0)}
\end{equation}
where $\alpha$ is the number of ancestors and $\pi$ the number of parents of the adjoined maximal element to
$x$ that produces $y$. The definition of $p_t(x\to y)$ originally appears in \cite{rs00}. It is shown there that
$p_t(x\to y)$ is a probability distribution in that it satisfies the Markov-sum rule
\begin{equation*}
\sum\brac{p_t(x\to y)\colon y\in\pscript _{n+1}\hbox{ with }x\to y}=1
\end{equation*}
The distribution $y\mapsto\pscript _t(x,y)$ is essentially the most general that is consistent with principles of causality and covariance \cite{rs00, sor03}. It is hoped that other theoretical principles or experimental data will determine the coupling constants. One suggestion is to take $t_k=1/k!$ \cite{sor03}.

As an illustration, which probably will not work for quantum gravity and cosmology, let $t_k=t^k$ for some $t>0$. This case has been previously studied and is called a \textit{percolation dynamics} \cite{hen06, sor03}. For this choice we have
\begin{equation*}
\lambda _t(s,r)=\sum _{k=0}^{s-r}\binom{s-r}{k}t^{r+k}=t^r\sum _{k=0}^{s-r}\binom{s-r}{k}t^k=t^r(1+t)^{s-r}
\end{equation*}
and as a special case $\lambda _t (n,0)=(1+t)^n$. Letting $\beta$ be the number of elements of $x$ not related to the adjoined maximal element, by \eqref{eq21} we have
\begin{equation*}
p_t(x\to y)=m(x\to y)t^\pi\frac{(1+t)^{\alpha -\pi}}{(1+t)^n}=m(x\to y)\frac{t^\pi}{(1+t)^{\pi +\beta}}
\end{equation*}
Letting $r=t(1+t)^{-1}$ we have that $1-r=(1+t)^{-1}$ and we obtain the more familiar form
$p_t(x\to y)=m(x\to y)r^\pi(1-r)^\beta$

We call an element $x\in\pscript$ a \textit{site} and view a site $x\in\pscript _n$ as a possible universe at step $n$ while a path may be viewed as a possible (evolved) universe. The set $\pscript$ together with the set of transition probabilities $p_t (x\to y)$ forms a \textit{classical sequential growth process} (CSGP) which we denote by
 $(\pscript ,p_t)$ \cite{rs00, vr06}. It is clear that $(\pscript ,p_t)$ is a Markov chain. (In traditional Markov chains, sites are called states but we reserve that term for quantum states to be used later.) As with any Markov chain, the probability of an $n$-path $\omega =x_0x_1\cdots x_n$ is 
\begin{equation*}
p_t^n(\omega )=p_t(x_0\to x_1)p_t(x_1\to x_2)\cdots p_t(x_{n-1}\to x_n)
\end{equation*}
and the probability of a site $x\in\pscript _n$ is
\begin{equation*}
p_t^n(x)=\sum\brac{p_t^n(\omega )\colon\omega\in\Omega _n,x_n=x}
\end{equation*}
Of course, $\omega\mapsto p _t^n(\omega )$ is a probability measure on $\Omega _n$.

\begin{exam}{1}
Figure~2 illustrates the first three steps of a CSGP where the 2 indicates the multiplicity $m(x_3\to x_6)$. To compute probabilities, we need the values of $\lambda _t(\alpha ,\pi )$ given in Table~1.
\end{exam}

\setlength{\unitlength}{8pt}
\begin{center}
\begin{picture}(45,26)
\put(22,-1){\circle{7}}  
\put(21.5,-1.5){\makebox{$\emptyset$}}
\put(17.5,-1.5){\makebox{$x_0$}}
\put(22,1.5){\vector(0,1){4}} 
\put(22,8){\circle{7}}  
\put(22,8){\circle*{.35}}
\put(17.5,7){\makebox{$x_1$}}
\put(19.5,8){\vector(-1,1){4}} 
\put(24.5,8){\vector(1,1){4}} 
\put(14,14){\circle{7}} 
\put(12,15.5){\vector(-1,1){5.5}} 
\put(16,15,5){\vector(1,1){5}} 
\put(14,15){\circle*{.35}}
\put(14,13){\circle*{.35}}
\put(14,13.2){\line(0,1){1.5}}
\put(9.75,13.75){\makebox{$x_2$}}
\put(14,16.5){\vector(0,1){4}} 
\put(30,14){\circle{7}} 
\put(28,15.5){\vector(-1,1){5}} 
\put(32,15,5){\vector(1,1){5}} 
\put(24,17){\makebox{$2$}}
\put(29,14){\circle*{.35}}
\put(31,14){\circle*{.35}}
\put(33,13.5){\makebox{$x_3$}}
\put(30,16.5){\vector(0,1){4}} 
\put(5,23){\circle{7}} 
\put(5,21.5){\circle*{.35}}
\put(5,23){\circle*{.35}}
\put(5,24.5){\circle*{.35}}
\put(5,21.5){\line(0,1){3}}
\put(.75,23){\makebox{$x_4$}}
\put(14,23){\circle{7}} 
\put(14,22){\circle*{.35}}
\put(13.25,23.5){\circle*{.35}}
\put(14.75,23.5){\circle*{.35}}
\put(14,22){\line(1,2){.8}}
\put(14,22){\line(-1,2){.8}}
\put(10,23){\makebox{$x_5$}}
\put(22,23){\circle{7}} 
\put(21,22){\circle*{.35}}
\put(22.5,22){\circle*{.35}}
\put(21,23.5){\circle*{.35}}
\put(21,22){\line(0,1){1.5}}
\put(18,23){\makebox{$x_6$}}
\put(30,23){\circle{7}} 
\put(29,22){\circle*{.35}}
\put(31,22){\circle*{.35}}
\put(30,23.75){\circle*{.35}}
\put(29,22){\line(1,2){.8}}
\put(31,22){\line(-1,2){.8}}
\put(26,23){\makebox{$x_7$}}
\put(38,23){\circle{7}} 
\put(36.5,23){\circle*{.35}}
\put(38,23){\circle*{.35}}
\put(39.5,23){\circle*{.35}}
\put(41,23){\makebox{$x_8$}}
\end{picture}
\vskip 3pc
\centerline{\textbf{Figure 2}}
\end{center}

\begin{center}
\begin{tabular}{c|c|c|c|c|c|c}
$(\alpha ,\pi)$&$(0,0)$&$(1,0)$&$(1,1)$&$(2,0)$&$(2,1)$&$(2,2)$\\
\hline
$\lambda _t(\alpha ,\pi)$&$t_0$&$t_0+t_1$&$t_1$&$t_0+2t_1+t_2$&$t_1+t_2$&$t_2$\\
\noalign{\medskip}
\multicolumn{7}{c}%
{\textbf{Table 1}}\\
\end{tabular}
\end{center}
\vglue 3pc

\noindent From Table~1 and \eqref{eq21} we obtain the transition probabilities given in Table~2 where $s_0=t_0+t_1$,
$s_1=t_0+2t_1+t_2$
\vglue 3pc

\begin{center}
\begin{tabular}{c|c|c|c|c|c}
$x_i\to x_j$&$x_0\to x_1$&$x_1\to x_2$&$x_1\to x_3$&$x_2\to x_4$&$x_2\to x_5$\\
\hline
$p_t(x_i\to x_j)$&$1$&$t_1/s_0$&$t_0/s_0$&$(t_1+t_2)/s_1$&$t_1/s_1$\\
\end{tabular}
\end{center}
\vskip 2pc
\begin{center}
\begin{tabular}{c|c|c|c|c}
$x_i\to x_j$&$x_2\to x_6$&$x_3\to x_6$&$x_3\to x_7$&$x_3\to x_8$\\
\hline
$p_t(x_i\to x_j)$&$t_0/s_1$&$2t_1/s_1$&$t_2/s_1$&$t_0/s_1$\\
\noalign{\bigskip}
\multicolumn{5}{c}%
{\textbf{Table 2}}\\
\end{tabular}
\end{center}
\vglue 3pc

\noindent Finally, Table~3 lists the probabilities of the various sites, where $s_2=s_0s_1$ and $p_t^0(x_0)=1$ by convention.
\vglue 3pc

\begin{center}
\begin{tabular}{c|c|c|c|c|c|c|c|c|c}
$x_i$&$x_0$&$x_1$&$x_2$&$x_3$&$x_4$&$x_5$&$x_6$&$x_7$&$x_8$\\
\hline
$p_t^{(n)}(x_i)$&$1$&$1$&$\frac{t_1}{s_0}$&$\frac{t_0}{s_0}$&$\frac{t_1(t_1+t_2)}{s_2}$%
&$\frac{t_1^2}{s_2}$&$\frac{3t_0t_1}{s_2}$&$\frac{t_0t_2}{s_2}$&$\frac{t_0^2}{s_2}$\\
\noalign{\bigskip}
\multicolumn{10}{c}%
{\textbf{Table 3}}\\
\end{tabular}
\end{center}
\vskip 3pc

\begin{exam}{2}
Figure~3 illustrates the offspring of $x_8$ in Figure~2.
\end{exam}

\setlength{\unitlength}{8pt}
\begin{center}
\begin{picture}(45,10)
\put(22,8){\circle{7}}  
\put(20.5,8){\circle*{.35}}
\put(22,8){\circle*{.35}}
\put(23.5,8){\circle*{.35}}
\put(22,4){\makebox{$x_8$}}
\put(19.5,8){\vector(-1,1){7.5}} 
\put(13.5,12){\makebox{$3$}}
\put(24.5,8){\vector(1,1){7.5}} 
\put(20.5,10){\vector(-1,2){2.25}} 
\put(18,12){\makebox{$3$}}
\put(23.5,10){\vector(1,2){2.25}} 
\put(10,17){\circle{7}} 
\put(8.5,16.2){\circle*{.35}}
\put(10,16.2){\circle*{.35}}
\put(11.5,16.2){\circle*{.35}}
\put(8.5,16.2){\line(1,2){.8}}
\put(10,16.2){\line(-1,2){.8}}
\put(9.23,17.85){\circle*{.35}}
\put(6,16){\makebox{$x$}}
\put(18,17){\circle{7}}  
\put(16.5,16.2){\circle*{.35}}
\put(18,16.2){\circle*{.35}}
\put(19.5,16.2){\circle*{.35}}
\put(16.5,17.8){\circle*{.35}}
\put(16.5,16.2){\line(0,1){1.5}}
\put(14.5,16){\makebox{$y$}}
\put(26,17){\circle{7}}  
\put(24.5,16.2){\circle*{.35}}
\put(26,16.2){\circle*{.35}}
\put(27.5,16.2){\circle*{.35}}
\put(26,17.7){\circle*{.35}}
\put(24.5,16.2){\line(1,1){1.5}}
\put(27.5,16.2){\line(-1,1){1.5}}
\put(26,16.2){\line(0,1){1.5}}
\put(22.5,16){\makebox{$z$}}
\put(34,17){\circle{7}} 
\put(32.25,16.75){\circle*{.35}}
\put(33.45,16.75){\circle*{.35}}
\put(34.55,16.75){\circle*{.35}}
\put(35.75,16.75){\circle*{.35}}
\put(30,16){\makebox{$u$}}
\centerline{\textbf{Figure 3}}
\end{picture}
\end{center}
\vskip 3pc

We now compute the transition probabilities.
\begin{align*}
p_t(x_8\to x)&=\frac{3\lambda _t(2,2)}{\lambda _t(3,0)}=\frac{3t_2}{t_0+3t_1+3t_2+t_3}\\
\noalign{\bigskip}
p_t(x_8\to y)&=\frac{3\lambda _t(1,1)}{\lambda _t(3,0)}=\frac{3t_1}{t_0+3t_1+3t_2+t_3}\\
\noalign{\bigskip}
p_t(x_8\to z)&=\frac{3\lambda _t(3,3)}{\lambda _t(3,0)}=\frac{t_3}{t_0+3t_1+3t_2+t_3}\\
\noalign{\bigskip}
p_t(x_8\to u&=\frac{3\lambda _t(0,0)}{\lambda _t(3,0)}=\frac{t_0}{t_0+3t_1+3t_2+t_3}\\
\end{align*}

Letting $\ascript _n$ be the power set $2^{\Omega _n}$ we have that $\ascript _n$ is an algebra of subsets of
$\Omega _n$ and $(\Omega _n,\ascript _n,p _t^n)$ is a probability space. Now we can consider $\Omega _n$ to be the product space $\Omega _n=\pscript _0\times\pscript _1\times\cdots\times\pscript _n$ and $\Omega$ to be the product space $\Omega =\pscript _0\times\pscript _1\times\pscript _2\times\cdots\,$. (Strictly speaking $\Omega _n$ is a subset of $\pscript _0\times\pscript _1\times\cdots\times\pscript _n$ because all elements of the latter set do not correspond to $n$-paths. However, we can define
\begin{equation*}
p _t^n(\pscript _0\times\pscript _1\times\cdots\times\pscript _n\smallsetminus\Omega _n)=0
\end{equation*}
and adjoining sets of measure zero is harmless. The same remark holds for $\Omega$ and for cylinder sets to be discussed next.) A subset $C\subseteq\Omega$ is a \textit{cylinder set} if
\begin{equation}         
\label{eq22}
C=C_1\times\pscript _{n+1}\times\pscript _{n+2}\times\cdots
\end{equation}
for some $C_1\in\ascript _n$. In particular, if $\omega\in\Omega _n$, then the \textit{elementary cylinder set}
$\rmcyl (\omega )$ is defined by
\begin{equation*}
\rmcyl (\omega )=\omega\times\pscript _{n+1}\times\pscript _{n+2}\times\cdots
\end{equation*}
It is easy to check that the collection of cylinder sets $\cscript (\Omega )$ forms an algebra of subsets of $\Omega$. Moreover, for $C\in\cscript (\Omega )$ of the form \eqref{eq22} we define $p_t(C)=p_t^n(C_1)$. Then $p_t$ is a well-defined probability measure on the algebra $\cscript (\Omega )$. It follows from the Kolmogorov extension theorem that $p_t$ has a unique extension to a probability measure $\nu _t$ on the $\sigma$-algebra $\ascript$ generated by
$\cscript (\Omega )$. We conclude that $(\Omega ,\ascript ,\nu _t)$ is a probability space. We can identify
$\ascript _n$ with the algebra of cylinder sets of the form \eqref{eq22} to obtain an increasing sequence of subalgebras $\ascript _0\subseteq\ascript _1\subseteq\cdots$ of $\ascript$ that generate $\ascript$. Also, the restriction $\nu _t\mid\ascript _n=p_t^n$. 

\section{Quantum Sequential Growth Processes} 
In Section~2 we described a general CSPG $(\pscript ,p _t)$. We now show how to ``quantize'' $(\pscript ,p_t)$ to obtain a quantum sequential growth process (QSGP). It is hoped that this formalism can be employed to construct a model for discrete quantum gravity. At the end of Section~2 we formed a path probability space
$(\Omega ,\ascript ,\nu _t)$ which we interpret as a space of potential universes. Let
$H=L_2(\Omega ,\ascript ,\nu _t)$ be the \textit{path Hilbert space}. We previously observed that $\ascript _n$ considered as an algebra of cylinder sets is a subalgebra of $\ascript$ and $\nu _t\mid\ascript _n=p_t^n$. We conclude that the $n$-\textit{path Hilbert spaces} $H_n=L_2(\Omega ,\ascript _n,p_t^n)$ form an increasing sequence $H_1\subseteq H_2\subseteq\cdots$ of closed subspaces of $H$. Assuming that $p_t^n(\omega )\ne 0$ for every
$\omega\in\Omega _n$ we have that $\dim (H_n)=\ab{\Omega _n}$ and that
\begin{equation*}
\brac{\chi _{\rmcyl (\omega )}/p_t^n(\omega )^{1/2}\colon\omega\in\Omega _n}
\end{equation*}
forms an orthonormal basis for $H_n$.

Let $\rho$ be a state (density operator) on $H_n$. We can and shall assume that $\rho$ is also a state on $H$ by defining $\rho f=0$ for all $f\in H_n^\perp$. If $A\in\ascript _n$ then the characteristic function $\chi _A\in H$ and
$\doubleab{\chi _A}=p_t^n(A)^{1/2}$. We define the \textit{decoherence functional}
$D_\rho\colon\ascript _n\times\ascript _n\to\complex$ by
\begin{equation*}
D_\rho (A,B)=\rmtr\paren{\rho\ket{\chi _B}\bra{\chi _A}}
\end{equation*}
It can be shown \cite{gud111} that $D_\rho$ has the usual properties of a decoherence functional. Namely,
$D_\rho (A,B)=\overline{D_\rho (B,A)}$, $A\mapsto D_\rho (A,B)$ is a complex measure on $\ascript _n$ and if
$A_i\in\ascript _n$, $i=1,\ldots ,r$, then the $r\times r$ matrix with components $D_\rho (A_i,A_j)$ is positive semidefinite. We also define the $q$-\textit{measure} $\mu _\rho\colon\ascript _n\to\real ^+$ by
$\mu _\rho (A)=D_\rho (A,A)$. In general, $\mu _\rho$ is not additive so $\mu _\rho$ is not a measure on
$\ascript _n$. However, $\mu _\rho$ is \textit{grade}-2 \textit{additive} \cite{gt09, gud09, sor94, sor07} in the sense that if $A,B,C\in\ascript _n$ are mutually disjoint then

\begin{align}         
\label{eq31}
\mu _\rho (A\cup B\cup C)&=\mu _\rho (A\cup B)+\mu _\rho (A\cup C)+\mu _\rho (B\cup C)\notag\\
  &\quad -\mu _\rho (A)-\mu _\rho (B)-\mu _\rho (C)
\end{align}

A subset $\qscript\subseteq\ascript$ is a \textit{quadratic algebra} if $\emptyset ,\Omega\in\qscript$ and if
$A,B,C\in\qscript$ are mutually disjoint with $A\cup B,A\cup C,B\cup C\in\qscript$, then $A\cup B\cup C\in\qscript$. A $q$-\textit{measure} on a quadratic algebra $\qscript$ is a map $\mu\colon\qscript\to\real ^+$ satisfying \eqref{eq31} whenever $A,B,C\in\qscript$ are mutually disjoint with $A\cup B,A\cup C,B\cup C\in\qscript$. In particular $\ascript _n$ is a quadratic algebra and $\mu _\rho\colon\ascript _n\to\real ^+$ is a $q$-measure in this sense.

Let $\rho _n$ be a state on $H_n$, $n=1,2,\ldots$, which can be viewed as a state on $H$. We say that the sequence $\rho _n$ is \textit{consistent} if
\begin{equation*}
D_{\rho _{n+1}}(A\times\pscript _{n+1}, B\times\pscript _{n+1})=D_{\rho _n}(A,B)
\end{equation*}
for every $A\in\ascript _n$. We call a consistent sequence $\rho _n$ a \textit{discrete quantum process} and we call the $\rho _n$ the \textit{local states} for the process. If $\lim\rho _n=\rho$ exists in the strong operator topology, we call $\rho$ the \textit{global} state for the process. If the global state $\rho$ exists, then $\mu _\rho$ is a (continuous)
$q$-measure on $\ascript$ that extends $\mu _{\rho _n}$, $n=1,2,\ldots\,$. Unfortunately the global state does not exist, in general, so we must work with the local states \cite{gud112, gsapp, sor11}. We contend that there is a discrete quantum process $\rho _n$ on the path Hilbert space $H$ that describes the dynamics for a discrete quantum gravity. As with the probability measures $p _t^n$, theoretical principles or experimental data will be required to give restrictions on the possible $\rho _n$. We shall consider one possibility shortly.

Let $\rho _n$ be a discrete quantum process on $H=L_2(\Omega ,\ascript ,\nu _t)$. If $C\in\cscript (\Omega )$ has the form \eqref{eq22} we define $\mu (C)=\mu _{\rho _n}(C_1)$. It is easy to check that $\mu$ is well-defined and gives a $q$-measure on algebra $\cscript (\Omega )$. In general, $\mu$ cannot be extended to a $q$-measure on $\ascript$, but it is important to extend $\mu$ to other physically relevant sets \cite{gud112, mocs05, sor11}. We say that a set
$A\in\ascript$ is \textit{suitable} if $\lim\rmtr\paren{e_n\ket{\chi _A}\bra{\chi _A}}$ exists and is finite and if this is the case we define $\mutilde (A)$ to be the limit. We denote the collection of suitable sets by $\sscript (\Omega )$. The proof of the next theorem is similar to a proof given in \cite{gud112}.

\begin{thm}       
\label{thm31}
$\sscript (\Omega )$ is a quadratic algebra and $\mutilde$ is a $q$-measure on $\sscript (\Omega )$ that extends
$\mu$ from $\cscript (\Omega )$.
\end{thm}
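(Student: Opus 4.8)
The plan is to reduce the whole statement to one elementary observation about the quadratic forms attached to the local states $\rho _n$. For $A\in\ascript$ and each $n$ put $\mu _n(A)=\rmtr\paren{\rho _n\ket{\chi _A}\bra{\chi _A}}=\elbows{\chi _A,\rho _n\chi _A}$, so that, by definition, $A$ is suitable precisely when $\lim _n\mu _n(A)$ exists, in which case $\mutilde (A)=\lim _n\mu _n(A)$. Since $\rho _n\ge 0$ and $\doubleab{\rho _n}\le 1$, we have $0\le\mu _n(A)\le\doubleab{\chi _A}^2=\nu _t(A)\le 1$ for every $A$ and $n$, so ``exists and is finite'' in the definition of suitable is just ``exists''. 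Moreover, since $\chi _{E\cup F}=\chi _E+\chi _F$ for disjoint $E,F$, expanding the quadratic form $f\mapsto\elbows{f,\rho _n f}$ at $\chi _A+\chi _B+\chi _C$ and at the three pairwise sums $\chi _A+\chi _B$, $\chi _A+\chi _C$, $\chi _B+\chi _C$ shows that, for any mutually disjoint $A,B,C\in\ascript$,
\begin{equation*}
\mu _n(A\cup B\cup C)=\mu _n(A\cup B)+\mu _n(A\cup C)+\mu _n(B\cup C)-\mu _n(A)-\mu _n(B)-\mu _n(C),
\end{equation*}
each interference term $2\,\mathrm{Re}\,\elbows{\chi _A,\rho _n\chi _B}$, $2\,\mathrm{Re}\,\elbows{\chi _A,\rho _n\chi _C}$, $2\,\mathrm{Re}\,\elbows{\chi _B,\rho _n\chi _C}$ occurring once on each side. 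Thus each $\mu _n$ is itself a nonnegative grade-2 additive functional on the whole $\sigma$-algebra $\ascript$, and the content of the theorem is that this property descends to the limit along suitable sets.

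First I would show $\cscript (\Omega )\subseteq\sscript (\Omega )$ with $\mutilde =\mu$ there. Let $C\in\cscript (\Omega )$ be of the form \eqref{eq22} with $C_1\in\ascript _n$. For all large $m$ the function $\chi _C$ lies in $H_m$, and under the identification $\ascript _n\subseteq\ascript _m$ we have $\mu _m(C)=\mu _{\rho _m}(C_1)$. Iterating the consistency condition in the form $D_{\rho _{k+1}}(E\times\pscript _{k+1},E\times\pscript _{k+1})=D_{\rho _k}(E,E)$ from $k=m-1$ down to $k=n$ collapses this to $\mu _m(C)=\mu _{\rho _n}(C_1)=\mu (C)$. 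Hence $m\mapsto\mu _m(C)$ is eventually constant with common value $\mu (C)$, so $C$ is suitable and $\mutilde (C)=\mu (C)$; in particular $\emptyset ,\Omega\in\sscript (\Omega )$.

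Next I would establish the closure property of $\sscript (\Omega )$ and grade-2 additivity of $\mutilde$ together. Suppose $A,B,C\in\sscript (\Omega )$ are mutually disjoint with $A\cup B,A\cup C,B\cup C\in\sscript (\Omega )$. Applying the displayed identity for $\mu _n$ at each $n$ and letting $n\to\infty$: the right-hand side converges, because each of its six terms does by suitability, hence so does the left-hand side; therefore $A\cup B\cup C$ is suitable, and the limit of the identity is exactly \eqref{eq31} for $\mutilde$. Since also $\mutilde\ge 0$, this shows $\sscript (\Omega )$ is a quadratic algebra and $\mutilde$ is a $q$-measure on it extending $\mu$ from $\cscript (\Omega )$.

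The argument is soft once the grade-2 identity for the $\mu _n$ is recorded. The only step that genuinely requires care is the extension claim: one must use the consistency of the local states to see that, for a cylinder set $C$, the sequence $\mu _m(C)$ actually stabilizes at $\mu (C)$ (rather than converging to some a priori unrelated number), so that $\mutilde$ restricts correctly to $\cscript (\Omega )$; everything else is the routine interchange of a limit with the finite sum in \eqref{eq31}.
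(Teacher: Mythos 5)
Your proposal is correct and takes essentially the same route as the argument the paper intends (it defers the proof to \cite{gud112}): observe that each $A\mapsto\rmtr\paren{\rho _n\ket{\chi _A}\bra{\chi _A}}=\elbows{\chi _A,\rho _n\chi _A}$ is a nonnegative grade-2 additive functional on all of $\ascript$, pass to the limit on suitable sets to get the quadratic-algebra closure and \eqref{eq31} for $\mutilde$, and use the consistency of the $\rho _n$ to see that for a cylinder set $C$ the sequence $\mu _m(C)$ stabilizes at $\mu (C)$. I see no gaps; your explicit treatment of the stabilization step is exactly the point that needs the consistency hypothesis.
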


We call a real-valued function $f\in H$ a \textit{random variable} (Actually, we are considering random variables with finite second moment but this restriction is convenient here.) We now give a method for ``quantizing'' $f$ to obtain a bounded self-adjoint operator (\textit{observable}) $\fhat$ on $H$. Although we employ $\fhat$ to define a quantum integral of $f$, there may be another important use for $\fhat$. The map $f\mapsto\fhat$ transforms classical observables to quantum observables. If a discrete quantum process $\rho _n$ governs the dynamics for a discrete quantum gravity, then in some sense, Einstein's field equation should be an approximation to the sequence $\rho _n$ which gives a strong restriction on $\rho _n$. In this respect, the map $f\mapsto\fhat$ may be useful in transforming the observables of classical relativity to quantum relativity.

The \textit{quantization} of a nonnegative random variable $f$ is the operator $\fhat$ on $H$ defined by
\begin{equation*}
(\fhat g)(y)=\int\min\sqbrac{f(x),f(y)}g(x)d\nu _t(x)
\end{equation*}
It easily follows that $\doubleab{\fhat}\le\doubleab{f}$ so $\fhat$ is a bounded self-adjoint operator on $H$. If $f$ is an arbitrary random variable, we have that $f=f^+-f^-$ where $f^+(x)=\max\sqbrac{f(x),0}$ and $f^-=-\min\sqbrac{f(x),0}$. We define the bounded self-adjoint operator $\fhat$ on $H$ by $\fhat =f^{+\wedge}-f^{-\wedge}$. It can be shown that $\doubleab{\fhat}\le\max\paren{\doubleab{\fhat ^+},\doubleab{\fhat ^-}}$ \cite{gud112}. The next result summarizes some of the important properties of $\fhat$ \cite{gud112}.

\begin{thm}       
\label{thm32}
{\rm (a)}\enspace For every $A\in\ascript$, $\chihat _A=\ket{\chi _A}\bra{\chi _A}$.
{\rm (b)}\enspace For every $\alpha\in\real$, $(\alpha f)^\wedge =\alpha\fhat$.
{\rm (c)}\enspace If $f\ge 0$, then $\fhat$ is a positive operator.
{\rm (d)}\enspace If $0\le f_1\le f_2\le\cdots$ is an increasing sequence of random variables converging in norm to a random variable $f$, then $\fhat _i\to\fhat$ in the operator norm topology.
{\rm (e)}\enspace If $f,g,h$ are random variables with disjoint supports, then
\begin{equation*}
(f+g+h)^\wedge = (f+g)^\wedge +(f+h)^\wedge +(g+h)^\wedge -\fhat -\ghat -\hhat
\end{equation*}
\end{thm}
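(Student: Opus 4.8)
The plan is to dispatch the five parts by first settling everything for nonnegative random variables and then reducing the signed case through the decomposition $f=f^+-f^-$ and the definition $\fhat=f^{+\wedge}-f^{-\wedge}$. For (a), observe that $\chi_A$ takes only the values $0$ and $1$, so $\min\sqbrac{\chi_A(x),\chi_A(y)}=\chi_A(x)\chi_A(y)$; hence $(\chihat_A g)(y)=\chi_A(y)\int\chi_A(x)g(x)\,d\nu_t(x)$, which is exactly $\paren{\ket{\chi_A}\bra{\chi_A}}g$ evaluated at $y$. For (b), when $\alpha\ge0$ and $f\ge0$ the positive homogeneity of $\min$ gives $(\alpha f)^\wedge=\alpha\fhat$ straight from the definition; for signed $f$ one splits into $\alpha\ge0$, where $(\alpha f)^\pm=\alpha f^\pm$, and $\alpha<0$, where $(\alpha f)^\pm=\ab{\alpha}f^\mp$, and reassembles via $\fhat=f^{+\wedge}-f^{-\wedge}$.

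The key device for (c) is the layer-cake identity $\min\sqbrac{f(x),f(y)}=\int_0^\infty\chi_{E_s}(x)\chi_{E_s}(y)\,ds$ with $E_s=\brac{f>s}$, valid for $f\ge0$. Applying Fubini (legitimate since $\nu_t$ is finite and $f\in H$) yields $\fhat=\int_0^\infty\ket{\chi_{E_s}}\bra{\chi_{E_s}}\,ds$, an integral of positive rank-one operators, so $\elbows{g\mid\fhat g}=\int_0^\infty\ab{\elbows{\chi_{E_s}\mid g}}^2\,ds\ge0$ and $\fhat$ is positive. As a bonus this reproves (a): for $f=\chi_A$ one has $E_s=A$ on $[0,1)$ and $E_s=\emptyset$ for $s\ge1$.

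Part (d) is where I expect the main difficulty, because quantization is nonlinear and one cannot simply apply $\doubleab{\fhat}\le\doubleab f$ to $f-f_i$. Instead I would bound kernels directly: since $\min$ is $1$-Lipschitz in the supremum norm of its two arguments, $\ab{\min\sqbrac{f(x),f(y)}-\min\sqbrac{f_i(x),f_i(y)}}\le\max\paren{\ab{(f-f_i)(x)},\ab{(f-f_i)(y)}}$. Squaring and integrating against $\nu_t\times\nu_t$ (a probability measure) gives $\doubleab{\fhat-\fhat_i}_{HS}^2\le2\doubleab{f-f_i}^2$; since the operator norm is dominated by the Hilbert--Schmidt norm, $\doubleab{\fhat-\fhat_i}\le\sqrt2\,\doubleab{f-f_i}\to0$, which is the claimed operator-norm convergence.

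Finally, for (e) I would first take $f,g,h\ge0$ with disjoint supports $F,G,H$ and verify the pointwise kernel identity $K_{f+g+h}=K_{f+g}+K_{f+h}+K_{g+h}-K_f-K_g-K_h$, where $K_\phi(x,y)=\min\sqbrac{\phi(x),\phi(y)}$, by case analysis on which of $F,G,H$ contain $x$ and $y$; disjointness forces at most one of $f,g,h$ to be nonzero at each point, so every case collapses to a one-line check (for instance $x\in F$, $y\in G$ makes both sides equal $\min\sqbrac{f(x),g(y)}$). Integrating the kernel identity gives the operator identity. For signed $f,g,h$ the six functions $f^\pm,g^\pm,h^\pm$ have pairwise disjoint supports, so $(f+g+h)^\pm=f^\pm+g^\pm+h^\pm$ and similarly for the pairwise sums; applying the nonnegative identity to the ``$+$'' and ``$-$'' families separately and recombining through $\fhat=f^{+\wedge}-f^{-\wedge}$ produces the signed identity. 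The case analysis in (e) is routine once organized by support, so the genuine obstacle remains the nonlinearity in (d), handled by passing to the Hilbert--Schmidt norm.
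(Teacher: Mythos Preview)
The paper does not actually prove Theorem~\ref{thm32}: it merely states the result and cites \cite{gud112} for the proof, so there is no in-paper argument to compare against. Your proposal is nonetheless sound, and in fact supplies what the paper omits.

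A couple of remarks on how your approach fits with the surrounding material. Your layer-cake representation in part~(c),
\[
\fhat=\int_0^\infty\ket{\chi_{\{f>s\}}}\bra{\chi_{\{f>s\}}}\,ds,
\]
is precisely the operator-level analogue of the tail-sum formula the paper records immediately afterward as Theorem~\ref{thm34}; indeed, taking $\rmtr(\rho\,\cdot)$ of both sides of your identity yields Theorem~\ref{thm34} directly. So (c) and Theorem~\ref{thm34} are really one result, and your proof makes this transparent in a way the paper's presentation (which separates them and cites two different references) does not.

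For (d), your passage through the Hilbert--Schmidt norm is the right move and perhaps worth stating more explicitly: since $f$ and the $f_i$ are nonnegative, both $\fhat$ and $\fhat_i$ are given directly by the kernel formula, and because $\nu_t$ is a probability measure the kernel lies in $L_2(\nu_t\times\nu_t)$; thus $\fhat-\fhat_i$ is Hilbert--Schmidt with the difference kernel, and your Lipschitz bound $\ab{\min(a,b)-\min(c,d)}\le\max(\ab{a-c},\ab{b-d})$ gives $\doubleab{\fhat-\fhat_i}_{HS}\le\sqrt{2}\,\doubleab{f-f_i}$. Note that the monotonicity hypothesis $f_1\le f_2\le\cdots$ is not actually used in your argument---you only need $f_i\ge 0$ and $\doubleab{f-f_i}\to 0$---so your proof gives a slightly stronger statement than the one recorded.

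For (e), your reduction of the signed case via $(f+g+h)^\pm=f^\pm+g^\pm+h^\pm$ (and similarly for the pairwise sums) is correct because disjoint supports of $f,g,h$ force disjoint supports of all six of $f^\pm,g^\pm,h^\pm$; this is the only place where a small slip could occur, and you have it right.
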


Let $\rho$ be a state on $H$ and let $\mu _\rho$ be the corresponding $q$-measure on $\ascript _n$ or $\ascript$. If $f$ is a random variable, we define the $q$-\textit{integral} (or $q$-\textit{expectation}) of $f$ with respect to
$\mu _\rho$ as
\begin{equation*}
\int fd\mu _\rho =\rmtr (\rho\fhat )
\end{equation*}
The next corollary follows from Theorem~\ref{thm32}

\begin{cor}       
\label{cor33}
{\rm (a)}\enspace For every $A\in\ascript$, $\int\chi _Ad\mu _\rho =\mu _\rho (A)$.
{\rm (b)}\enspace For every $\alpha\in\real$, $\int\alpha fd\mu _\rho =\alpha\int fd\mu _\rho$.
{\rm (c)}\enspace If $f\ge 0$, then $\int fd\mu _\rho\ge 0$.
{\rm (d)}\enspace If $f_i\ge 0$ is an increasing sequence of random variables converging in norm to a random variable $f$, then $\lim\int f_id\mu _\rho =\int fd\mu _\rho$.
{\rm (e)}\enspace If $f,g,h$ are random variables with disjoint supports, then
\begin{align*}
\int (f+g+h)d\mu _\rho&=\int (f+g)d\mu _\rho +\int (f+h)d\mu _\rho +\int (g+h)d\mu _\rho\\
  &\quad -\int fd\mu _\rho -\int gd\mu _\rho -\int hd\mu _\rho
\end{align*}
\end{cor}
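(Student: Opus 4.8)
The plan is to derive Corollary~\ref{cor33} directly from Theorem~\ref{thm32} together with the definition $\int f\,d\mu _\rho =\rmtr (\rho\fhat )$ and the bilinearity and continuity of the trace pairing $X\mapsto\rmtr (\rho X)$. Each part (a)--(e) of the corollary is obtained by applying the corresponding part of Theorem~\ref{thm32} inside this trace.

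For (a), apply Theorem~\ref{thm32}(a): $\chihat _A=\ket{\chi _A}\bra{\chi _A}$, so $\int\chi _A\,d\mu _\rho =\rmtr\paren{\rho\ket{\chi _A}\bra{\chi _A}}=D_\rho (A,A)=\mu _\rho (A)$ by the definition of the $q$-measure. For (b), apply Theorem~\ref{thm32}(b) and linearity of the trace: $\int\alpha f\,d\mu _\rho =\rmtr\paren{\rho(\alpha f)^\wedge}=\alpha\,\rmtr (\rho\fhat )=\alpha\int f\,d\mu _\rho$. For (c), apply Theorem~\ref{thm32}(c): if $f\ge 0$ then $\fhat$ is a positive operator, and since $\rho$ is a state (hence positive), $\rmtr (\rho\fhat )\ge 0$; this uses the standard fact that the trace of the product of two positive operators is nonnegative.

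For (d), use Theorem~\ref{thm32}(d): $\fhat _i\to\fhat$ in operator norm. Since $\rho$ is trace class, $X\mapsto\rmtr (\rho X)$ is continuous with respect to the operator norm (indeed $\ab{\rmtr (\rho X)}\le\doubleab{X}_{\mathrm{op}}\,\rmtr (\rho)=\doubleab{X}_{\mathrm{op}}$), so $\int f_i\,d\mu _\rho =\rmtr (\rho\fhat _i)\to\rmtr (\rho\fhat )=\int f\,d\mu _\rho$. For (e), apply Theorem~\ref{thm32}(e) inside the trace and expand by linearity: $\rmtr\paren{\rho(f+g+h)^\wedge}$ equals $\rmtr\paren{\rho(f+g)^\wedge}+\rmtr\paren{\rho(f+h)^\wedge}+\rmtr\paren{\rho(g+h)^\wedge}-\rmtr(\rho\fhat)-\rmtr(\rho\ghat)-\rmtr(\rho\hhat)$, which is exactly the claimed grade-2 identity for the $q$-integral.

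None of the steps presents a genuine obstacle, since all the analytic content has been packaged into Theorem~\ref{thm32}; the only points requiring a word of justification are the two elementary facts about the trace used in (c) and (d) --- namely that $\rmtr (\rho X)\ge 0$ when $\rho ,X\ge 0$ and that $\ab{\rmtr (\rho X)}\le\doubleab{X}_{\mathrm{op}}$ when $\rho$ is a state. The mildest care needed is to confirm that when the $q$-measure in question is $\mu _\rho$ on $\ascript$ (rather than on $\ascript _n$) the global state $\rho$ is assumed to exist, so that $\fhat$ and $\rmtr (\rho\fhat )$ make sense; in the local-state setting one replaces $\rho$ by $\rho _n$ and $H$ by $H_n$ and the same argument goes through verbatim.
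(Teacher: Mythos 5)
Your argument is exactly the paper's intended one: the paper simply remarks that Corollary~\ref{cor33} follows from Theorem~\ref{thm32}, and your derivation of each part (a)--(e) by applying the corresponding part of Theorem~\ref{thm32} inside the trace pairing $X\mapsto\rmtr (\rho X)$, using its linearity, positivity, and norm-continuity, is the correct filling-in of that remark. No gaps; the closing caveat about working with the local states $\rho _n$ on $H_n$ when no global state exists is a sensible and accurate observation.
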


The next result is called the \textit{tail-sum} formula and gives a justification for calling $\int fd\mu _\rho$ a
$q$-\textit{integral} \cite{gud09, gud111}.
\begin{equation*}
\end{equation*}

\begin{thm}       
\label{thm34}
If $f\ge 0$ is a random variable, then
\begin{equation*}
\int fd\mu _\rho =\int _0^\infty\mu _\rho\paren{\brac{x\colon f(x)>\lambda}}d\lambda
\end{equation*}
where $d\lambda$ denotes Lebesgue measure on $\real$.
\end{thm}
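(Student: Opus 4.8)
The plan is to realize the operator $\fhat$ as a weak integral of rank-one operators $\ket{\chi_{A_\lambda}}\bra{\chi_{A_\lambda}}$, where $A_\lambda=\brac{x\colon f(x)>\lambda}$, and then take the trace against $\rho$. Everything rests on the elementary scalar identity
\begin{equation*}
\min\sqbrac{a,b}=\int_0^\infty\chi_{(\lambda,\infty)}(a)\,\chi_{(\lambda,\infty)}(b)\,d\lambda\qquad(a,b\ge 0),
\end{equation*}
which holds because the integrand equals $1$ exactly when $\lambda<\min\sqbrac{a,b}$. Since $f\ge 0$, putting $a=f(x)$ and $b=f(y)$ yields $\min\sqbrac{f(x),f(y)}=\int_0^\infty\chi_{A_\lambda}(x)\chi_{A_\lambda}(y)\,d\lambda$.

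Substituting this into the kernel of $\fhat$ and interchanging the order of integration, I obtain, for $g,h\in H$,
\begin{align*}
\elbows{h,\fhat g}
&=\int_\Omega\int_\Omega\overline{h(y)}\min\sqbrac{f(x),f(y)}g(x)\,d\nu_t(x)\,d\nu_t(y)\\
&=\int_\Omega\int_\Omega\int_0^\infty\overline{h(y)}\chi_{A_\lambda}(y)\chi_{A_\lambda}(x)g(x)\,d\lambda\,d\nu_t(x)\,d\nu_t(y)\\
&=\int_0^\infty\elbows{h,\chi_{A_\lambda}}\elbows{\chi_{A_\lambda},g}\,d\lambda ,
\end{align*}
where the interchange is justified by Tonelli via the bound $\min\sqbrac{f(x),f(y)}\le f(x)^{1/2}f(y)^{1/2}$ and the Cauchy--Schwarz estimate $\int_\Omega\int_\Omega\ab{h(y)}\,\ab{g(x)}f(x)^{1/2}f(y)^{1/2}\,d\nu_t(x)d\nu_t(y)\le\doubleab{h}\,\doubleab{g}\int_\Omega f\,d\nu_t<\infty$, which is finite because $f\in H=L_2$ and $\nu_t$ is a probability measure. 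Since a bounded operator is determined by its sesquilinear form, this says $\fhat=\int_0^\infty\ket{\chi_{A_\lambda}}\bra{\chi_{A_\lambda}}\,d\lambda$ in the weak sense.

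Now write $\rho$ in spectral form $\rho=\sum_k p_k\ket{\psi_k}\bra{\psi_k}$ with $p_k\ge 0$, $\sum_k p_k=1$ and $\brac{\psi_k}$ orthonormal, and use $\rmtr(\rho T)=\sum_k p_k\elbows{\psi_k,T\psi_k}$ for bounded $T$. Taking $T=\fhat$ and inserting the form identity above,
\begin{equation*}
\int fd\mu_\rho=\rmtr(\rho\fhat)=\sum_k p_k\int_0^\infty\ab{\elbows{\chi_{A_\lambda},\psi_k}}^2\,d\lambda .
\end{equation*}
All terms are nonnegative, so Tonelli lets me interchange $\sum_k$ and $\int_0^\infty$, giving
\begin{equation*}
\int fd\mu_\rho=\int_0^\infty\elbows{\chi_{A_\lambda},\rho\chi_{A_\lambda}}\,d\lambda=\int_0^\infty\rmtr\paren{\rho\ket{\chi_{A_\lambda}}\bra{\chi_{A_\lambda}}}\,d\lambda=\int_0^\infty D_\rho(A_\lambda,A_\lambda)\,d\lambda ,
\end{equation*}
and $D_\rho(A_\lambda,A_\lambda)=\mu_\rho(A_\lambda)=\mu_\rho\paren{\brac{x\colon f(x)>\lambda}}$ by definition of $\mu_\rho$, which is the assertion.

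The only real work lies in the two Fubini/Tonelli steps; both reduce to absolute convergence, controlled by $f\in L_2$ and Cauchy--Schwarz for the first and by positivity of all summands for the second. One should also confirm that $\lambda\mapsto\mu_\rho(\brac{f>\lambda})$ is measurable and integrable: measurability is a by-product of the Fubini argument, and integrability follows from $\mu_\rho(A)=\elbows{\chi_A,\rho\chi_A}\le\doubleab{\chi_A}^2=\nu_t(A)$ together with Chebyshev's inequality $\nu_t(\brac{f>\lambda})\le\lambda^{-2}\doubleab{f}^2$. Note that $f\ge 0$ is used both in the scalar identity and to ensure $A_\lambda$ captures all of $f$, so a signed $f$ genuinely requires the $f=f^+-f^-$ split. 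An alternative route, closer to the spirit of Corollary~\ref{cor33}, is to prove the formula first for nonnegative simple $f$ using parts (a), (b) and the grade-$2$ additivity (e), and then pass to general $f\ge 0$ by monotone approximation, invoking Corollary~\ref{cor33}(d) on the left and dominated convergence (with the Chebyshev bound above) on the right.
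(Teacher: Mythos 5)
Your proof is correct and complete. Note that the paper itself gives no argument for Theorem~\ref{thm34}: it is stated with a citation to \cite{gud09,gud111}, where the tail-sum expression essentially serves as the definition of the quantum integral and the substance is its agreement with $\rmtr (\rho\fhat )$; so there is no in-paper proof to match, and judged on its own your route is the natural direct one. The layer-cake identity $\min\sqbrac{a,b}=\int_0^\infty\chi_{(\lambda ,\infty )}(a)\chi_{(\lambda ,\infty )}(b)\,d\lambda$ turns the kernel of $\fhat$ into $\int_0^\infty\chi_{A_\lambda}(x)\chi_{A_\lambda}(y)\,d\lambda$, hence $\fhat =\int_0^\infty\ket{\chi_{A_\lambda}}\bra{\chi_{A_\lambda}}\,d\lambda$ in the weak sense, and then $\rmtr (\rho\fhat )=\int_0^\infty\elbows{\chi_{A_\lambda},\rho\chi_{A_\lambda}}\,d\lambda =\int_0^\infty\mu_\rho (A_\lambda )\,d\lambda$ is exactly the claim. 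Your two interchanges are properly justified: Fubini via $\min\sqbrac{f(x),f(y)}\le f(x)^{1/2}f(y)^{1/2}$ and Cauchy--Schwarz, using that $f\in L_2\subseteq L_1$ for the probability measure $\nu_t$, and then Tonelli over the spectral decomposition of $\rho$ with all terms nonnegative; measurability of $\lambda\mapsto\mu_\rho\paren{\brac{f>\lambda}}$ does indeed come out as a by-product, and the bound $\mu_\rho (A)\le\nu_t(A)$ (valid since $\rho\le I$) with Chebyshev is a correct, if not strictly needed, finiteness check. The alternative you sketch at the end (simple functions via Corollary~\ref{cor33}(a),(b),(e), then monotone limits with \ref{cor33}(d) and dominated convergence) is closer in spirit to how the cited references organize the material, but it requires iterating grade-2 additivity to $n$ disjoint sets and some care in matching the nested sets $\brac{f>\lambda}$ to the disjoint level pieces; your direct kernel argument avoids that bookkeeping and is the cleaner of the two.
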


It follows from Theorem~\ref{thm34} that if $f$ is an arbitrary random variable, then
\begin{equation*}
\int fd\mu _\rho =\int _0^\infty\mu _\rho\paren{\brac{x\colon f(x)>\lambda}}d\lambda
  -\int _0^\infty\mu _\rho\paren{\brac{x\colon f(x)<-\lambda}}d\lambda
\end{equation*}

Let $\rho _n$ be a discrete quantum process on $H$. We say that a random variable $f$ is \textit{integrable} for
$\rho _n$ if $\lim\rmtr (\rho _n\fhat )$ exists and is finite and in this case we define $\int fd\mutilde$ to be this limit. Notice that if $A\in\sscript (\Omega )$, then $\chi _A$ is integrable and $\int\chi _Ad\mutilde =\mutilde (A)$. The next result follows from Corollary~\ref{cor33}.

\begin{thm}       
\label{thm35}
{\rm (a)}\enspace If $f$ is integrable and $\alpha\in\real$, then $\alpha f$ is integrable and
$\int\alpha fd\mutilde =\alpha\int fd\mutilde$.
{\rm (b)}\enspace If $f$ is integrable with $f\ge 0$, then $\int fd\mutilde\ge 0$.
{\rm (c)}\enspace If $f,g,h$ are integrable with mutually disjoint supports and $f+g$, $f+h$, $g+h$ are integrable, then $f+g+h$ is integrable and
\begin{align*}
\int (f+g+h)d\mutilde&=\int (f+g)d\mutilde +\int (f+h)d\mutilde +\int (g+h)d\mutilde\\
  &\quad -\int fd\mutilde -\int gd\mutilde -\int hd\mutilde
\end{align*}
\end{thm}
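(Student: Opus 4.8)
The plan is to invoke Corollary~\ref{cor33} at each finite level $n$ — that is, with the state $\rho=\rho _n$ and the corresponding $q$-measure $\mu _{\rho _n}$, so that $\int f\,d\mu _{\rho _n}=\rmtr(\rho _n\fhat)$ — and then let $n\to\infty$. The point is that ``integrable for $\rho _n$'' has been defined precisely so that $\rmtr(\rho _n\fhat)$ converges, and a finite linear combination of convergent sequences converges to the corresponding combination of limits; hence each of the three identities in Corollary~\ref{cor33}(b), (c), (e) survives the passage to the limit, and the only real task is to check that the hypotheses supplied in Theorem~\ref{thm35} furnish exactly the sequences whose convergence is needed.

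For part~(a): by Corollary~\ref{cor33}(b), $\rmtr\paren{\rho _n(\alpha f)^\wedge}=\int\alpha f\,d\mu _{\rho _n}=\alpha\int f\,d\mu _{\rho _n}=\alpha\,\rmtr(\rho _n\fhat)$ for every $n$. Since $f$ is integrable, $\rmtr(\rho _n\fhat)\to\int f\,d\mutilde$, so the left side converges to $\alpha\int f\,d\mutilde$; thus $\alpha f$ is integrable and $\int\alpha f\,d\mutilde=\alpha\int f\,d\mutilde$. For part~(b): if $f\ge 0$, Corollary~\ref{cor33}(c) gives $\rmtr(\rho _n\fhat)=\int f\,d\mu _{\rho _n}\ge 0$ for every $n$, and a convergent sequence of nonnegative reals has a nonnegative limit, so $\int f\,d\mutilde\ge 0$.

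For part~(c): I would start from the grade-2 identity of Corollary~\ref{cor33}(e) applied with $\rho=\rho _n$, which reads
\begin{align*}
\rmtr\paren{\rho _n(f+g+h)^\wedge}&=\rmtr\paren{\rho _n(f+g)^\wedge}+\rmtr\paren{\rho _n(f+h)^\wedge}+\rmtr\paren{\rho _n(g+h)^\wedge}\\
&\quad-\rmtr(\rho _n\fhat)-\rmtr(\rho _n\ghat)-\rmtr(\rho _n\hhat)
\end{align*}
for every $n$, using that $f,g,h$ have mutually disjoint supports. By hypothesis $f$, $g$, $h$, $f+g$, $f+h$, $g+h$ are all integrable for $\rho _n$, so the six sequences on the right converge; hence the left side converges, which is exactly the statement that $f+g+h$ is integrable, and taking the limit of the displayed identity yields the claimed grade-2 additivity of $\int\cdot\,d\mutilde$. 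I do not expect any genuine obstacle here: the substantive content already lives in Corollary~\ref{cor33} (ultimately in Theorem~\ref{thm32}), and the only care needed is the routine bookkeeping of the six limits in~(c) together with the remark that integrability is precisely the hypothesis that makes each of those limits available.
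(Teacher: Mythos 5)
Your proposal is correct and takes essentially the same route as the paper, whose entire proof is the remark that the theorem follows from Corollary~3.3: one applies that corollary with $\rho=\rho_n$ for each local state and passes to the limit, using that integrability is by definition the convergence of $\rmtr(\rho_n\fhat)$. Your write-up simply makes this limiting bookkeeping explicit, which is exactly what is intended.
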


\section{Discrete Isometric Processes} 
Section~3 discussed a quantum gravity model in terms of a discrete quantum process $\rho _n$, $n=1,2,\ldots$, on
$H=L_2(\Omega ,\ascript ,\nu _t)$. It may be that $\rho _n$ is determined by a system of isometries (there is some controversy about whether this is possible \cite{sor03}). This would provide a restriction on the possible $\rho _n$. Moreover, we are familiar with dynamics governed by isometries so this might aid our intuition. The reader should note that such a formalism is motivated by and related to the sum over histories approach to quantum mechanics.
The results in this section are similar to results in \cite{gud112} taken in a different context.

Let $K_n$ be the Hilbert space of complex-valued function on $\pscript _n$ with the usual inner product
\begin{equation*}
\elbows{f,g}=\sum _{x\in\pscript _n}\overline{f(x)}g(x)
\end{equation*}
We call $K_n$ the $n$-\textit{site Hilbert space} and we denote the standard basis $\chi _{\brac{x}}$,
$x\in\pscript _n$, of $K_n$ by $e_x^n$. The projection operator $P_n(x)=\ket{e_x^n}\bra{e_x^n}$, $x\in\pscript _n$, describe the site at step $n$. In our context, a \textit{discrete isometric system} is a collection of isometries $U(s,r)$,
$r\le s\in\positive$, such that $U(s,r)\colon K_r\to K_s$, $U(r,r)=I_r$ and $U(t,r)=U(t,s)U(s,r)$ for every
$r\le s\le t\in\positive$. Recall that $U(s,r)$ is an isometry means that $U(s,r)$ is an operator satisfying
$U(s,r)^*U(s,r)=I_r$ and $U(s,r)U(s,r)^*=P_s$ where $I_r$ is the identity on $K_r$ and $P_s$ is the projection onto the range of $U(s,r)$ in $K_s$.

Let $U(s,r)$, $r\le s\in\positive$ be a discrete isometric system and let $\omega\in\Omega _n$ be an $n$-path. Since all $n$-paths go through $x_1$ of Figure~1 we can and shall assume that all $n$-paths begin at $x_1$. Then
$\omega$ has the form $\omega =x_1\omega _2\omega _3\ldots\omega _n$, $\omega _i\in\pscript _i$,
$i=1,2,\ldots ,n$. We describe $\omega$ by the operator $C_n(\omega )\colon K_1\to K_n$ given by
\begin{equation}         
\label{eq41}
C_n(\omega )=P_n(\omega _n)U(n,n-1)P_{n-1}(\omega _{n-1})U(n-1,n-2)\cdots P_2(\omega _2)U(2,1)
\end{equation}
Defining $a(\omega )$ by
\begin{align}         
\label{eq42}
a(\omega )&=\elbows{e_{\omega _n}^n,U(n,n-1)e_{\omega _{n-1}}^{n-1}}
  \elbows{e_{\omega _{n-1}}^{n-1},U(n-1,n-2)e_{\omega _{n-2}}^{n-2}}\notag\\
  &\quad\cdots\elbows{e_{\omega _2}^2,U(2,1)e_{x_1}^1}
\end{align}
\eqref{eq41} becomes
\begin{equation}         
\label{eq43}
C_n(\omega )=a(\omega )\ket{e_{\omega _n}^n}\bra{e_{x_1}^1}
\end{equation}
Of course, we can identify $K_1$ with $\complex$ so $C_n(\omega )$ is the operator given by
$C_n(\omega )\alpha =\alpha a(\omega )\ket{e_{\omega _n}^n}$ for every $\alpha\in\complex$. We call
$a(\omega )$ the \textit{amplitude} of $\omega\in\Omega _n$ and interpret $\ab{a(\omega )}^2$ as the probability of the path $\omega$ according to the dynamics $U(s,r)$. The next result shows that
$\omega\mapsto\ab{a(\omega )}^2$ is indeed a probability distribution.

\begin{lem}       
\label{lem41}
For the $n$-path space $\Omega _n$ we have
\begin{equation*}
\sum _{\omega\in\Omega _n}\ab{a(\omega )}^2=1
\end{equation*}
\end{lem}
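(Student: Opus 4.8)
The plan is to turn the sum into a telescoping product, using repeatedly that each $U(i,i-1)$ is an isometry. The one fact I would isolate first is a Parseval identity: for every $i\ge 2$ and every $z\in\pscript_{i-1}$, since $\brac{e^i_w\colon w\in\pscript_i}$ is an orthonormal basis of $K_i$ and $U(i,i-1)^*U(i,i-1)=I_{i-1}$,
\begin{equation*}
\sum_{w\in\pscript_i}\ab{\elbows{e^i_w,\,U(i,i-1)e^{i-1}_z}}^2=\doubleab{U(i,i-1)e^{i-1}_z}^2=\doubleab{e^{i-1}_z}^2=1 .
\end{equation*}

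Next I would rewrite the summand. Using \eqref{eq42}, and writing an $n$-path as $\omega=x_1\omega_2\cdots\omega_n$ with the convention $\omega_1=x_1$ (legitimate since $\pscript_1=\brac{x_1}$ and every path begins at $x_1$),
\begin{equation*}
\ab{a(\omega)}^2=\prod_{i=2}^{n}\ab{\elbows{e^i_{\omega_i},\,U(i,i-1)e^{i-1}_{\omega_{i-1}}}}^2 .
\end{equation*}
Because $\elbows{e^i_{\omega_i},U(i,i-1)e^{i-1}_{\omega_{i-1}}}$ contributes $0$ unless $\omega_{i-1}\to\omega_i$, the sum over $\Omega_n$ coincides with the unrestricted sum over $(\omega_2,\dots,\omega_n)\in\pscript_2\times\cdots\times\pscript_n$. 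Now I would sum over $\omega_n$ first: only the $i=n$ factor depends on $\omega_n$, so that factor sums by itself and the Parseval identity (with $z=\omega_{n-1}$) replaces it by $1$, leaving the same expression with $n$ replaced by $n-1$. Iterating --- i.e.\ an induction on $n$ whose base case $\sum_{\omega_2\in\pscript_2}\ab{\elbows{e^2_{\omega_2},U(2,1)e^1_{x_1}}}^2=1$ is again the Parseval identity --- collapses everything and yields $\sum_{\omega\in\Omega_n}\ab{a(\omega)}^2=1$.

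Equivalently, and perhaps more cleanly, one can argue at the operator level: by \eqref{eq43} one has $C_n(\omega)^*C_n(\omega)=\ab{a(\omega)}^2\,\ket{e^1_{x_1}}\bra{e^1_{x_1}}$, while summing the form \eqref{eq41} and using $\sum_{\omega_i\in\pscript_i}P_i(\omega_i)=I_i$ together with $U(i,i-1)^*U(i,i-1)=I_{i-1}$ telescopes $\sum_\omega C_n(\omega)^*C_n(\omega)$ down to $U(2,1)^*I_2U(2,1)=I_1=\ket{e^1_{x_1}}\bra{e^1_{x_1}}$; comparing the two expressions and noting this projection is nonzero gives the result. Either way the computation itself is routine once the isometry is used in Parseval form; there is no deep obstacle, only two bookkeeping points to keep straight: that the sum over genuine $n$-paths may be replaced by the sum over the full product $\pscript_2\times\cdots\times\pscript_n$ (equivalently, that amplitudes of non-paths vanish, since the isometric system respects production), and aligning the indices so that the telescoping actually terminates at the one-dimensional space $K_1$.
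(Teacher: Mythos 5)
Your argument is essentially the paper's own proof: the paper likewise expands $\ab{a(\omega)}^2$ via \eqref{eq42} and telescopes the sum one coordinate at a time, using exactly your Parseval identity $\sum_{w\in\pscript_i}\ab{\elbows{e^i_w,U(i,i-1)e^{i-1}_z}}^2=1$ at each stage until only $\sum_{\omega_2}\ab{\elbows{e^2_{\omega_2},U(2,1)e^1_{x_1}}}^2=1$ remains. The only differences are presentational: you make explicit the bookkeeping point that the paper passes over silently (matrix elements off the production relation must vanish for the restricted path sum to telescope), and you add an optional operator-level variant via $C_n(\omega)^*C_n(\omega)$ that the paper does not use.
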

\begin{proof}
By \eqref{eq42} we have
\begin{align*}
\sum _{\omega\in\Omega _n}&\ab{a(\omega )}^2\\
   &=\sum _{\omega\in\Omega _n}\ab{\elbows{e_{\omega _n}^n,U(n,n-1)e_{\omega _{n-1}}^{n-1}}}^2
   \ab{\elbows{e_{\omega _{n-1}}^{n-1},U(n-1,n-2)e_{\omega _{n-2}}^{n-2}}}^2\\
   &\quad\cdots\ab{\elbows{e_{\omega _2}^2,U(2,1)e_{x_1}^1}}^2\\
   &=\sum _{\omega\in\Omega _{n-1}}\ab{\elbows{e_{\omega _{n-1}}^{n-1},U(n-1,n-2)e_{\omega _{n-2}}^{n-2}}}^2
   \cdots\ab{\elbows{e_{\omega _2}^2,U(2,1)e_{x_1}^1}}^2\\
&\quad\vdots\\
&=\sum _{\omega\in\Omega _2}\ab{\elbows{e_{\omega _2}^2,U(2,1)e_{x_1}^1}}^2=1\qedhere
\end{align*}
\end{proof}

The quantity $C_n(\omega ')^*C_n(\omega )$ describes the interference between the two paths
$\omega ,\omega '\in\Omega _n$. Applying \eqref{eq43} we see that
\begin{equation}         
\label{eq44}
C_n(\omega ')^*C_n(\omega )=\overline{a(\omega ')}a(\omega )\delta _{\omega _n,\omega '_n}I_1
\end{equation}
which we can identify with the complex number $\overline{a(\omega ')}a(\omega )\delta _{\omega _n,\omega 'n}$.
For $A\in\ascript _n$ the \textit{class operator} $C_n(A)$ is
\begin{equation*}
C_n(A)=\sum _{\omega\in A}C_n(\omega )
\end{equation*}
It is clear that $A\mapsto C_n(A)$ is an operator-valued measure on the algebra $\ascript _n$. Moreover,
$C_n(\Omega _n)=U(n,1)$ because by \eqref{eq42} and \eqref{eq43} we have
\begin{align*}
C_n(\Omega _n)&=\sum _{\omega\in\Omega _n}C_n(\omega )=\sum _{\omega\in\Omega _n}
\elbows{e_{\omega _n}^n,U(n,1)e_{x_1}^1}\ket{e_{\omega _n}^n}\bra{e_{x_1}^1}\\
&=U(n,1)
\end{align*}
It is well-known that $D_n\colon\ascript _n\times\ascript _n\to\complex$ defined by
\begin{equation*}
D_n(A,B)=\elbows{C_n(A)^*C_n(B)e_{x_1}^1,e_{x_1}^1}
\end{equation*}
is a decoherence functional and we see that $D_n(\Omega _n,\Omega _n)=1$. Defining the $q$-measure
$\mu _n\colon\ascript _n\to\real ^+$ by $\mu _n(A)=D_n(A,A)$, we have that $\mu _n(\Omega _n)=1$.

The $n$-\textit{distribution} on $\pscript _n$ given by
\begin{equation*}
p_n(x)=\mu _n\paren{\brac{\omega\in\Omega _n\colon\omega _n=x}}
\end{equation*}
is interpreted as the probability that site $x$ is visited at step $n$. The next result shows that $p_n$ gives the usual quantum distribution.

\begin{thm}       
\label{thm42}
For $x\in\pscript _n$ we have
\begin{equation*}
p_n(x)=\ab{\sum _{\omega _n=y}a(\omega )}^2=\ab{\elbows{e_n^n,U(n,1)e_{x_1}^1}}^2
\end{equation*}
\end{thm}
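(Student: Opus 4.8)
The plan is to unwind the definition of $p_n(x)$ through the $q$-measure $\mu _n$ and the decoherence functional $D_n$, and then to exploit the rank-one form \eqref{eq43} of $C_n(\omega )$. Set $A_x=\brac{\omega\in\Omega _n\colon\omega _n=x}\in\ascript _n$. By the definitions of $p_n$, $\mu _n$ and $D_n$ we have $p_n(x)=\mu _n(A_x)=D_n(A_x,A_x)=\elbows{C_n(A_x)^*C_n(A_x)e_{x_1}^1,e_{x_1}^1}$, so the whole statement reduces to identifying the class operator $C_n(A_x)$. (Here and below the sum $\sum _{\omega _n=x}a(\omega )$ and the vector $e_x^n$ are the ones intended in the displayed statement.)

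First I would compute $C_n(A_x)$. Since $C_n(A_x)=\sum _{\omega\in A_x}C_n(\omega )$ and every $\omega\in A_x$ has final coordinate $\omega _n=x$, formula \eqref{eq43} gives $C_n(\omega )=a(\omega )\ket{e_x^n}\bra{e_{x_1}^1}$ for each such $\omega$, and hence $C_n(A_x)=\paren{\sum _{\omega _n=x}a(\omega )}\ket{e_x^n}\bra{e_{x_1}^1}$. Because $\elbows{e_x^n,e_x^n}=1$, a short bra-ket computation then yields $C_n(A_x)^*C_n(A_x)=\ab{\sum _{\omega _n=x}a(\omega )}^2\ket{e_{x_1}^1}\bra{e_{x_1}^1}$; applying this to $e_{x_1}^1$ and pairing with $e_{x_1}^1$ (again using $\doubleab{e_{x_1}^1}=1$) gives $p_n(x)=\ab{\sum _{\omega _n=x}a(\omega )}^2$, the first asserted equality.

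For the second equality I would use the identity $C_n(\Omega _n)=U(n,1)$ established just before the theorem. Expanding $C_n(\Omega _n)=\sum _{\omega\in\Omega _n}a(\omega )\ket{e_{\omega _n}^n}\bra{e_{x_1}^1}$ and grouping paths according to their final site, one gets $U(n,1)e_{x_1}^1=\sum _{y\in\pscript _n}\paren{\sum _{\omega _n=y}a(\omega )}e_y^n$; taking the inner product with $e_x^n$ and using orthonormality of $\brac{e_y^n\colon y\in\pscript _n}$ picks out $\elbows{e_x^n,U(n,1)e_{x_1}^1}=\sum _{\omega _n=x}a(\omega )$. Substituting this into the first equality finishes the proof.

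I do not expect a genuine obstacle: the argument is bookkeeping in Dirac notation, and the one point needing a moment's care — whether $C_n(A_x)^*C_n(A_x)$ carries cross terms between distinct paths — dissolves because all paths in $A_x$ share the same terminal ket $\ket{e_x^n}$, so $C_n(A_x)$ is already rank one and the amplitudes simply add inside the scalar coefficient. The only facts invoked are \eqref{eq42}, \eqref{eq43}, the definitions of $D_n$, $\mu _n$, $p_n$, and the computation $C_n(\Omega _n)=U(n,1)$, all available above.
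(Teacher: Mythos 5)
Your proposal is correct and follows essentially the same route as the paper: both reduce $p_n(x)$ to $D_n(A_x,A_x)$, use the rank-one form \eqref{eq43} (equivalently \eqref{eq44}) so that the amplitudes of paths ending at $x$ add coherently to give $\ab{\sum_{\omega_n=x}a(\omega)}^2$, and identify that sum with $\elbows{e_x^n,U(n,1)e_{x_1}^1}$. The only cosmetic difference is that you sum the $C_n(\omega)$ first into a single rank-one class operator and extract the matrix element via $C_n(\Omega_n)=U(n,1)$, whereas the paper expands $D_n(A,A)$ as a double sum over path pairs and cites \eqref{eq42} directly; these are the same computation.
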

\begin{proof}
Letting $A=\brac{\omega\in\Omega _n\colon\omega _n=x}$ we have by \eqref{eq44} that
\begin{align*}
p_n(x)=D_n(A,A)&=\elbows{C_n(A)^*C_n(A)e_{x_1}^1,e_{x_1}^1}\\
   &=\sum\brac{\elbows{C_n(\omega ')^*C_n(\omega )e_{x_1}^1,e_{x_1}^1}\colon\omega '_n=\omega _n=x}\\
   &=\sum\brac{\overline{a(\omega ')}a(\omega )\colon\omega '_n=\omega _n=x}\\
   &=\ab{\sum _{\omega _n=x}a(\omega )}^2
\end{align*}
By \eqref{eq42} we have that
\begin{equation*}
\sum _{\omega _n=x}a(\omega )=\elbows{e_x^n,U(n,1)e_{x_1}^1}
\end{equation*}
and the result follows.
\end{proof}

We define the \textit{decoherence matrix} as the matrix $\dhat _n$ with components
\begin{equation*}
\dhat _n(\omega ,\omega ')=D_n\paren{\brac{\omega},\brac{\omega '}}
\end{equation*}
$\omega ,\omega '\in\Omega _n$. We have by \eqref{eq44} that
\begin{equation}         
\label{eq45}
\dhat _n(\omega ,\omega ')=\elbows{C_n(\omega ')^*C_n(\omega )e_{x_1}^1,e_{x_1}^1}
  =a(\omega )\overline{a(\omega ')}\delta _{\omega _n,\omega '_n}
\end{equation}
Notice that $\mu _n(\omega )=\dhat _n(\omega ,\omega )=\ab{a(\omega )}^2$ and by Lemma~\ref{lem41} that 
$\sum _{\omega\in\Omega _n}\mu _n(\omega )=1$. Finally, notice that
\begin{align*}
D_n(A,B&=\sum\brac{\dhat _n(\omega ,\omega ')\colon\omega\in A,\omega '\in B}\\
&=\sum\brac{a(\omega )\overline{a(\omega ')}\delta _{\omega _n,\omega '_n}\colon\omega\in A,\omega '\in B}
\end{align*}
and hence
\begin{equation}         
\label{eq46}
\mu _n(A)=D_n(A,A)=\sum _{\omega ,\omega '\in A}\dhat _n(\omega ,\omega ')
  =\sum _{\omega ,\omega '\in A}a(\omega )\overline{a(\omega ')}\delta _{\omega _n,\omega '_n}
\end{equation}

Define the Hilbert space $H'_n$ as the set of complex-valued functions on $\Omega _n$ with the usual inner product
\begin{equation*}
\elbows{f,g}=\sum _{\omega\in\Omega _n}\overline{f(\omega )}g(\omega )
\end{equation*}
Then $\dhat _n$ corresponds to the operator (also denoted by $\dhat _n$) given by
\begin{equation*}
(\dhat _nf)(\omega )=\sum _{\omega '\in\Omega _n}\dhat _n(\omega ,\omega ')f(\omega ')
\end{equation*}

\begin{thm}       
\label{thm43}
The operator $\dhat _n$ is a state on $H'_n$.
\end{thm}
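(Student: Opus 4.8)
The plan is to verify directly that $\dhat _n$ has the two defining properties of a density operator on the finite-dimensional Hilbert space $H'_n$: positivity (which over $\complex$ already entails self-adjointness) and unit trace. First I would note that $\Omega _n$ is a finite set --- each $\pscript _i$ contains only finitely many isomorphism classes of posets, so there are only finitely many $n$-paths --- hence $H'_n$ is finite-dimensional and every operator on it is automatically bounded and trace-class. Thus it suffices to show $\dhat _n\ge 0$ and $\rmtr (\dhat _n)=1$.

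For positivity, the key structural point is that the matrix in \eqref{eq45} factors through the terminal site of a path, because of the Kronecker delta $\delta _{\omega _n,\omega '_n}$. For each $x\in\pscript _n$ define $\psi _x\in H'_n$ by $\psi _x(\omega )=a(\omega )$ when $\omega _n=x$ and $\psi _x(\omega )=0$ otherwise. A comparison of matrix entries using \eqref{eq45} shows $\dhat _n=\sum _{x\in\pscript _n}\ket{\psi _x}\bra{\psi _x}$, a sum of positive rank-one operators, so $\dhat _n\ge 0$. Equivalently, for $f\in H'_n$ one computes, grouping the double sum of \eqref{eq46}-type by the common terminal site $x$ forced by the delta, $\elbows{f,\dhat _nf}=\sum _{x\in\pscript _n}\ab{\sum _{\omega _n=x}\overline{f(\omega )}a(\omega )}^2\ge 0$; either form makes the quadratic form manifestly nonnegative. (Self-adjointness $\dhat _n(\omega ,\omega ')=\overline{\dhat _n(\omega ',\omega )}$ is immediate from \eqref{eq45}, though it also follows from positivity.)

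For the trace, $\rmtr (\dhat _n)=\sum _{\omega\in\Omega _n}\dhat _n(\omega ,\omega )=\sum _{\omega\in\Omega _n}\ab{a(\omega )}^2=1$ by Lemma~\ref{lem41}; this is exactly the observation recorded just after \eqref{eq45}. (With the rank-one decomposition this reads $\rmtr (\dhat _n)=\sum _x\doubleab{\psi _x}^2=\sum _x\sum _{\omega _n=x}\ab{a(\omega )}^2=\sum _{\omega}\ab{a(\omega )}^2=1$.) Combining positivity, unit trace, and finite-dimensionality yields that $\dhat _n$ is a state on $H'_n$.

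There is essentially no serious obstacle here; the proof is a short verification. The only point requiring care is the bookkeeping with complex conjugates under the inner product $\elbows{f,g}=\sum _\omega\overline{f(\omega )}g(\omega )$ and the confirmation that the factor $\delta _{\omega _n,\omega '_n}$ in \eqref{eq45} is precisely what splits the quadratic form into a sum of squares indexed by the terminal sites $x\in\pscript _n$ --- equivalently, what exhibits $\dhat _n$ as a block-diagonal direct sum of rank-one projections (up to normalization) over those sites.
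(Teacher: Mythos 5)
Your proposal is correct and follows essentially the same route as the paper: positivity is read off from the structure of \eqref{eq45} and the unit trace comes from Lemma~\ref{lem41}, exactly as in the paper's proof. The only difference is that where the paper simply cites \cite{gsapp, mocs05} for the positivity of the matrix in \eqref{eq45}, you verify it explicitly via the rank-one decomposition $\dhat _n=\sum _{x\in\pscript _n}\ket{\psi _x}\bra{\psi _x}$ over terminal sites, which is a welcome filling-in of that cited step rather than a different argument.
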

\begin{proof}
It follows from \eqref{eq45} that $\dhat _n$ is a positive operator \cite{gsapp, mocs05}. By Lemma~\ref{lem41} we have
\begin{equation*}
\rmtr (\dhat)=\sum _{\omega \in\Omega _n}D_n(\omega ,\omega )
  =\sum _{\omega\in\Omega _n}\ab{a(\omega )}^2=1
\end{equation*}
Hence, $\dhat _n$ is a trace~1 positive operator so $\dhat _n$ is a state on $H'_n$.
\end{proof}

Each $\omega\in\Omega _n$ corresponds to a unit vector $\chi _{\brac{x}}$ in $H'_n$ and for every $A\in\ascript _n$ we have the vector $\ket{\chi _A}=\sum\brac{\chi _{\brac{x}}\colon\omega\in A}$.

\begin{lem}       
\label{lem44}
The decoherence functional satisfies
\begin{equation*}
D_n(A,B)=\rmtr\paren{\ket{\chi _B}\bra{\chi _A}\dhat _n}
\end{equation*}
\end{lem}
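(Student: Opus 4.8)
The plan is to evaluate the right-hand side directly, using the rank-one structure of $\ket{\chi_B}\bra{\chi_A}$ together with the explicit entries of $\dhat _n$ recorded in \eqref{eq45}, and then to recognize the result as the expression for $D_n(A,B)$ displayed just before \eqref{eq46}. So the whole statement reduces to matching two finite sums.

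First I would reduce the trace to an inner product by the standard identity $\rmtr\paren{\ket{u}\bra{v}M}=\elbows{v,Mu}$, valid for any operator $M$ on a finite-dimensional Hilbert space and any vectors $u,v$ (it follows at once by expanding $\rmtr$ in an orthonormal basis and collapsing the resulting resolution of the identity). Applying this with $u=\chi_B$, $v=\chi_A$, $M=\dhat _n$ gives
\[
\rmtr\paren{\ket{\chi _B}\bra{\chi _A}\dhat _n}=\elbows{\chi _A,\dhat _n\chi _B},
\]
so it suffices to show $\elbows{\chi _A,\dhat _n\chi _B}=D_n(A,B)$.

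Next I would unwind this inner product using $\chi _A=\sum _{\omega\in A}\chi _{\brac{\omega}}$, $\chi _B=\sum _{\omega '\in B}\chi _{\brac{\omega '}}$, the orthonormality of the vectors $\chi _{\brac{\omega}}$, and the defining action $(\dhat _nf)(\omega )=\sum _{\omega '\in\Omega _n}\dhat _n(\omega ,\omega ')f(\omega ')$, which shows $\elbows{\chi _{\brac{\omega}},\dhat _n\chi _{\brac{\omega '}}}=\dhat _n(\omega ,\omega ')$. Summing over the two index sets yields
\[
\elbows{\chi _A,\dhat _n\chi _B}=\sum _{\omega\in A}\sum _{\omega '\in B}\dhat _n(\omega ,\omega ')=\sum _{\omega\in A}\sum _{\omega '\in B}a(\omega )\overline{a(\omega ')}\delta _{\omega _n,\omega '_n},
\]
where the last equality is \eqref{eq45}. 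Since the right-hand side is exactly the formula for $D_n(A,B)$ displayed immediately before \eqref{eq46}, the two sides agree and the lemma follows.

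There is no genuine obstacle here; the argument is pure bookkeeping. The only points needing a little care are keeping the conjugate-linear slot of $\elbows{\,\cdot\,,\,\cdot\,}$ on the correct argument so that the bar in $a(\omega )\overline{a(\omega ')}$ lands where \eqref{eq45} places it, and checking that after the reduction the first argument of $\dhat _n$ ranges over $A$ (matching the convention for $D_n(A,B)$ used before \eqref{eq46}) rather than over $B$; since $\dhat _n(\omega ,\omega ')=\overline{\dhat _n(\omega ',\omega )}$ by \eqref{eq45}, any such slip would merely exchange $D_n(A,B)$ with $D_n(B,A)=\overline{D_n(A,B)}$ and is caught immediately.
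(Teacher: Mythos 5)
Your proposal is correct and follows essentially the same route as the paper: the paper's proof simply expands $\rmtr\paren{\ket{\chi _B}\bra{\chi _A}\dhat _n}$ in the orthonormal basis $\brac{\chi _{\brac{\omega}}}$ and collapses the rank-one factor, which is exactly your identity $\rmtr\paren{\ket{u}\bra{v}M}=\elbows{v,Mu}$ carried out by hand, before matching the resulting double sum of matrix elements $\dhat _n(\omega ,\omega ')$ with the bi-additive expansion of $D_n(A,B)$ preceding \eqref{eq46}. Your closing remark about the conjugation/slot bookkeeping is apt, since the paper itself is loose about the order of $A$ and $B$ at that point.
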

\begin{proof}
For every $A,\in\ascript _n$ we have
\begin{align*}
\rmtr\paren{\ket{\chi _B}\bra{\chi _A}\dhat _n}&=\sum _{\omega\in\Omega _n}
  \elbows{\ket{\chi _B}\bra{\chi _A}\dhat _n\chi _{\brac{\omega}},\chi _{\brac{\omega}}}\\
  &=\sum _{\omega\in\Omega _n}
  \elbows{\dhat _n\chi _{\brac{\omega}},\ket{\chi _A}\bra{\chi _B}\chi _{\brac{\omega}}}\\
  &=\sum\brac{\elbows{\dhat _n\chi _{\brac{\omega}},\chi _A}\colon\omega\in B}\\
  &=\sum\brac{\elbows{\dhat _n\chi _{\brac{\omega}},\chi _{\brac{\omega '}}}\colon\omega\in B,\omega '\in A}\\
  &=\sum\brac{D_n(\omega ',\omega )\colon\omega '\in A,\omega\in B}=D_n(A,B)\qedhere
\end{align*}
\end{proof}

We now transfer the states $\dhat _n$ on $H'_n$ to states on $H_n$, $n=1,2,\ldots\,$. The set
$\brac{\chi _{\brac{\omega}}\colon\omega\in\Omega _n}$ forms an orthonormal basis for $H'_n$ and assuming that $p_t^n\paren{\brac{\omega}}\ne 0$ for all $\omega\in\Omega _n$, we have that
$\brac{v_\omega\colon\omega\in\Omega _n}$ is an orthonormal basis for $H_n$ where
\begin{equation*}
v_\omega =p_t^n\paren{\brac{\omega}}^{-1/2}\chi _{\rmcyl (\omega )}
\end{equation*}
Defining $U_n\chi _{\brac{\omega}}=v_\omega$ and extending $U_n$ by linearity, $U_n\colon H'_n\to H_n$ becomes a unitary operator and $U_n\colon H'_n\to H$ is an isometry from $H'_n$ into $H$. Letting $P_n$ be the projection of $H$ onto the subspace $H_n$ we have
\begin{equation*}
P_nf=\sum _{\omega\in\Omega _n}\elbows{v_\omega ,f}v_\omega
  =\sum _{\omega\in\Omega _n}p_t^n
  \paren{\brac{\omega}}^{-1}\int f\chi _{\rmcyl (\omega )}d\nu _t\chi _{\rmcyl (\omega )}
\end{equation*}
In particular, for $A\in\ascript$ we obtain
\begin{equation*}
P_n\chi _A=\sum _{\omega\in\Omega _n}p_t^n
\paren{\brac{\omega}}^{-1}\nu _t\paren{A\cap\rmcyl (\omega )}\chi _{\rmcyl (\omega )}
\end{equation*}
Hence,
\begin{equation*}
P_n1=\sum _{\omega\in\Omega _n}\chi _{\rmcyl (\omega )}=1
\end{equation*}
To transfer $\dhat _n$ from $H'_n$ to $H_n$ we define $\rho _n=U_n\dhat _nU_n^*P_n$. Then $\rho _n$ is a state on $H_n$ and also on $H$ as before.

\begin{thm}       
\label{thm45}
The sequence of states $\rho _n$, $n=1,2,\ldots$, is consistent.
\end{thm}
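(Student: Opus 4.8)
The plan is to identify the decoherence functional $D_{\rho_n}$ of the transferred state $\rho_n$ with the class--operator decoherence functional $D_n$ of the isometric system, and then to verify the consistency relation directly for the family $\brac{D_n}$. Thus the argument splits into a ``transfer'' step, $D_{\rho_n}=D_n$ for all $n$, and a ``dynamics'' step, $D_{n+1}(A\times\pscript_{n+1},B\times\pscript_{n+1})=D_n(A,B)$ for $A,B\in\ascript_n$; only the latter uses the isometry structure, and combining the two yields the desired identity $D_{\rho_{n+1}}(A\times\pscript_{n+1},B\times\pscript_{n+1})=D_{\rho_n}(A,B)$.

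For the dynamics step I would first prove the lifting identity $C_{n+1}(A\times\pscript_{n+1})=U(n+1,n)C_n(A)$. An $(n+1)$-path whose $n$-truncation lies in $A$ has the form $\omega z$ with $\omega\in A$ and $\omega_n\to z$, and by \eqref{eq41}, $C_{n+1}(\omega z)=P_{n+1}(z)U(n+1,n)C_n(\omega)$. Summing over the offspring $z$ of $\omega_n$ and using that the range of $U(n+1,n)C_n(\omega)$ lies in $U(n+1,n)\complex e_{\omega_n}^n$, on which $\sum_{z\colon\omega_n\to z}P_{n+1}(z)$ acts as the identity, gives $C_{n+1}(A\times\pscript_{n+1})=\sum_{\omega\in A}U(n+1,n)C_n(\omega)=U(n+1,n)C_n(A)$. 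Since $C_n(B)$ maps $K_1$ into $K_n$ and $U(n+1,n)^*U(n+1,n)=I_n$, it then follows that
\begin{align*}
D_{n+1}(A\times\pscript_{n+1},B\times\pscript_{n+1})
&=\elbows{C_{n+1}(A\times\pscript_{n+1})^*C_{n+1}(B\times\pscript_{n+1})e_{x_1}^1,e_{x_1}^1}\\
&=\elbows{C_n(A)^*U(n+1,n)^*U(n+1,n)C_n(B)e_{x_1}^1,e_{x_1}^1}\\
&=\elbows{C_n(A)^*C_n(B)e_{x_1}^1,e_{x_1}^1}=D_n(A,B).
\end{align*}

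For the transfer step, Lemma~\ref{lem44} gives $D_n(A,B)=\rmtr\paren{\ket{\chi_B}\bra{\chi_A}\dhat_n}$ with $\chi_A=\sum_{\omega\in A}\chi_{\brac{\omega}}$ taken in $H'_n$, while $\rho_n=U_n\dhat_nU_n^*P_n$ is the image of $\dhat_n$ under the unitary $U_n\colon H'_n\to H_n$ sending $\chi_{\brac{\omega}}$ to $v_\omega$; invariance of the trace under this conjugation should give $D_{\rho_n}(A,B)=\rmtr\paren{\rho_n\ket{\chi_B}\bra{\chi_A}}=D_n(A,B)$ once one has matched the two meanings of ``characteristic function of $A$'' on the two Hilbert spaces. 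The step I expect to be the main obstacle is exactly this matching: one must check that the rescaling $v_\omega=p_t^n(\brac{\omega})^{-1/2}\chi_{\rmcyl(\omega)}$ hidden in $U_n$ does not distort the decoherence functional, so that $\rho_n$ genuinely carries $D_n$. I would also note that the lifting identity tacitly uses that a discrete isometric system respects the producing relation, i.e.\ $\elbows{e_z^{n+1},U(n+1,n)e_x^n}=0$ unless $x\to z$; if that is not among the standing hypotheses it should be added.
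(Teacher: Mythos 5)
Your ``dynamics'' step is, in substance, the paper's own argument. The paper proves \eqref{eq47} at the level of amplitudes, showing $\sum_{x\in\pscript _{n+1}}a(\omega x)\overline{a(\omega 'x)}=a(\omega )\overline{a(\omega ')}\delta _{\omega _n,\omega '_n}$ from $U(n+1,n)^*U(n+1,n)=I_n$ and the resolution of the identity in $K_{n+1}$, then sums over $\omega\in A$, $\omega '\in B$; your lifting identity $C_{n+1}(A\times\pscript _{n+1})=U(n+1,n)C_n(A)$ packages exactly the same two facts at the operator level, so this half is essentially the paper's proof in a different wrapper. Your closing caveat is well placed and applies to the paper too: the paper's sum runs over all $x\in\pscript _{n+1}$, whereas $A\times\pscript _{n+1}$ read inside $\Omega _{n+1}$ supplies only the extensions $\omega x$ with $\omega _n\to x$, so one tacitly needs $\elbows{e_x^{n+1},U(n+1,n)e_{\omega _n}^n}=0$ unless $\omega _n\to x$, a hypothesis not contained in the stated definition of a discrete isometric system.

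The ``transfer'' step you flagged is precisely where the paper is silent, and your worry about the rescaling is justified. The paper's proof opens by asserting that consistency of $\rho _n$ is equivalent to \eqref{eq47}, i.e.\ it identifies $D_{\rho _n}$ with $D_n$ without comment. With the stated definitions the identification is not literal: since $U_n\chi _{\brac{\omega}}=v_\omega$ while $\chi _{\rmcyl (\omega )}=p_t^n\paren{\brac{\omega}}^{1/2}v_\omega$, one computes $D_{\rho _n}(A,B)=\rmtr\paren{\rho _n\ket{\chi _B}\bra{\chi _A}}=\sum\brac{\paren{p_t^n\paren{\brac{\omega}}p_t^n\paren{\brac{\omega '}}}^{1/2}\dhat _n(\omega ,\omega ')\colon\omega\in A,\ \omega '\in B}$, which carries exactly the weights you anticipated, and \eqref{eq47} alone does not give consistency of these weighted functionals (already on the diagonal it would force $\sum _x p_t(y\to x)\ab{\elbows{e_x^{n+1},U(n+1,n)e_y^n}}^2=1$, tying the amplitudes to the classical transition probabilities). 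So your plan is not missing anything the paper supplies: what you and the paper both actually establish is \eqref{eq47}, i.e.\ consistency of the family $D_n$ (equivalently of the states $\dhat _n$ on $H'_n$); the passage from there to consistency of the transferred states $\rho _n$ in the sense of Section~3 is a genuine gap, which your write-up flags honestly and the paper's proof passes over by fiat.
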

\begin{proof}
To show that $\rho _n$ is consistent is equivalent to showing that
\begin{equation}         
\label{eq47}
D_n(A\times\pscript _{n+1},B\times\pscript _{n+1})=D_n(A,B)
\end{equation}
for all $A,B\in\ascript _n$. Using the notation $\omega x=\omega _1\omega _2\cdots\omega _n x$ for
$\omega\in\Omega _n$ and $x\in\pscript _{n+1}$, \eqref{eq47} is equivalent to
\begin{equation*}
D_n(\omega ,\omega ')=\sum _{x,y\in\pscript _{n+1}}D_{n+1}(\omega x,\omega 'x)
=\sum _{x\in\pscript _{n+1}}D_{n+1}(\omega x,\omega 'x)
\end{equation*}
for every $\omega ,\omega '\in\Omega _n$. Since
\begin{equation*}
\sum _{x\in\pscript _{n+1}}\elbows{U(n+1,n)e_{\omega '_n}^n,e_x^{n+1}}
\elbows{e_n^{n+1},U(n+1,n)e_{\omega _n}^n}=\delta _{\omega _n,\omega '_n}
\end{equation*}
 it follows that
 \begin{equation*}
 \sum _{x\in\pscript _{n+1}}a(\omega x)\overline{a(\omega 'x)}
 =a(\omega )\overline{a(\omega ')}\delta _{\omega _n,\omega '_n}
\end{equation*}
for all $\omega ,\omega '\in\Omega _n$. Hence
\begin{align*}
\sum _{x\in\pscript _{n+1}}D_{n+1}(\omega x,\omega 'x)
  &=\sum _{x\in\pscript _{n+1}}a(\omega x)\overline{a(\omega 'x)}
  =a(\omega )\overline{a(\omega ')}\delta _{\omega _n,\omega '_n}\\
  &=D_n(\omega ,\omega ')\qedhere
\end{align*}
\end{proof}

We conclude from Theorem~\ref{thm45} that $\rho _n$ is a discrete quantum process on
$H=L_2(\Omega ,\ascript ,\nu _t)$ that was constructed from a discrete isometric process.

\end{document}